\let\NAT@parse\undefined
\newcommand{\normsg}{\cl N_{{\rm sg},\alpha}}
\newcommand{\kappasg}{\kappa_{_{^{\rm sg}}}}
\newcommand{\Rbb}{\mathbb{R}}
\newcommand{\scp}[2]{\langle #1, #2 \rangle}
\newtheorem{proposition}{Proposition}
\newtheorem{corollary}{Corollary}
\newcommand{\supp}{{\rm supp}\,}
\newcommand{\tinv}[1]{{\textstyle\frac{1}{#1}}}
\newcommand{\sign}{{\rm sign}\,}
\newcommand{\ud}{\mathrm{d}}
\renewcommand{\leq}{\leqslant}
\renewcommand{\geq}{\geqslant}
\DeclareMathOperator{\rank}{rank}
\DeclareMathOperator{\ve}{vec}
\DeclareMathOperator{\tr}{tr}
\DeclareMathOperator{\conv}{conv}
\DeclareMathOperator{\st}{{s.\!t.}\xspace}
\DeclareMathOperator{\Id}{\mathds{1}}
\DeclareMathOperator*{\argmin}{argmin}
\newcommand{\bb}{\mathbb}
\newcommand{\range}[1]{{[\![#1]\!]}}
\newcommand{\bs}{\boldsymbol}
\newcommand{\cl}{\mathcal}
\newcommand{\ie}{\emph{i.e.}, }
\newcommand{\eg}{\emph{e.g.}, }
\newcommand{\bqmap}{{\bs A}}
\newcommand{\rv}{\mbox{r.v.}\xspace}
\DeclareRobustCommand{\amgis}{\text{\reflectbox{$\Sigma$}}}
\newcommand{\atomic}{\sharp}
\title{Consistent Basis Pursuit for Signal and Matrix Estimates\\
in Quantized Compressed Sensing}
\author{A. Moshtaghpour$^*$, L. Jacques$^*$, V. Cambareri$^*$, K. Degraux$^*$, C. De Vleeschouwer\thanks{ICTEAM
    institute, ELEN Department, Universit\'e catholique de Louvain
    (UCL), B1348 Louvain-la-Neuve, Belgium. VC,
    KD and CDV are funded by Belgian National Science Foundation
(F.R.S.-FNRS). AM is funded by the Walloon Region Mecatech project
SAVE. Copyright (c) 2015 IEEE.}}
\begin{document}
\maketitle

\begin{abstract}
This paper focuses on the estimation of low-complexity signals when they are observed
through $M$ uniformly
quantized compressive observations. Among such signals, we consider
1-D sparse vectors, low-rank matrices, or compressible signals
that are well approximated by
one of these two models. In this context, we prove the
estimation efficiency of a
variant of Basis Pursuit Denoise, called Consistent Basis Pursuit
(CoBP), enforcing consistency between the observations and the
re-observed estimate, while promoting its low-complexity nature. 
We show that the reconstruction error of CoBP decays like $M^{-1/4}$
when all parameters but $M$ are fixed. 
Our proof is connected to recent
bounds on the proximity of vectors or matrices when~\emph{(i)}~those
belong to a set of small intrinsic ``dimension'', as measured by the Gaussian mean
width, and \emph{(ii)} they share the same quantized (dithered) random
projections. By solving CoBP with
a proximal algorithm, we provide some extensive numerical observations
that confirm the
theoretical bound as $M$ is increased, displaying even faster
error decay than predicted. The same phenomenon is observed in the special,
yet important case of 1-bit CS. \\
\ \\
\noindent\emph{Keywords:} Quantized compressed sensing, quantization, consistency, error decay,
low-rank, sparsity.
\end{abstract}

\section{Introduction}
\label{sec:introduction}

The theory of Compressed Sensing (CS) shows that many signals of interest can be
reconstructed from a few linear, and typically random,
observations~\cite{candes2006ssr,donoho2006cs,foucart2013mathematical}. Interestingly, this reconstruction is made possible if
the number of observations (or \emph{measurements}) is adjusted to the
intrinsic complexity of the signal, \eg its
sparsity for vectors or its low-rankness for matrices. Thus, this principle is a generalization of the
Shannon-Nyquist sampling theorem, where the sampling rate is set by the
bandwidth of the signal. However, a significant aspect of CS systems is the effect of
\emph{quantization} on the acquired observations, in particular for the purpose of 
compression and transmission~\cite{BoufChapter,GLPSY13,Kamilov12,Jacques2010,zymnis2010compressed,Dai2009,Dai2011}. 
This quantization is a non-linear transformation that both distorts the CS
observations and increases, especially at low bit rates, the reconstruction error of CS
reconstruction procedures.   

This work focuses on minimizing the impact of (scalar) quantization
during the reconstruction of a signal from its quantized compressive
observations. While more
efficient quantization procedures exist in the
literature (\eg $\Sigma\Delta$~\cite{GLPSY13},
universal~\cite{B_TIT_12}, binned~\cite{bib:Pai06,Kamilov12},
vector~\cite{bib:Pai06,vivekQuantFrame} or analysis-by-synthesis quantizations~\cite{shira2013}), scalar quantization remains
appealing for its implementation simplicity in most electronic devices,
and for its robustness against measurement lost.  

Conversely to other attempts, which consider quantization distortion as additive Gaussian
measurement noise \cite{candes2006near} and promote a Euclidean ($\ell_2$) fidelity with the
signal observations as in the Basis Pursuit Denoise (BPDN) program, better signal reconstruction methods are reached
by forcing \emph{consistency} between the re-observed signal estimate and
the quantized observations~\cite{Kamilov12,sun2009optimal,goyal_1998_lowerbound_qc}. 

We show here that a consistent version of the basis pursuit program \cite{Chen98atomic}, coined CoBP, provides
better signal estimates at large $M$ than those obtained by
BPDN. When reconstructing sparse or compressible signals,
CoBP is similar, up to an additional normalization constraint, to former
methods proposed in \cite{Dai2009,Dai2011,Jacques2010,Dirsksen-gap-RIP-sparse}. We prove the efficiency of CoBP from recent results on the proximity
of signals when those are taken in a set $\cl K \subset \bb R^N$ of small
``dimension'', \ie with small Gaussian
width $w(\cl K)$~\cite{ai2014one,chandrasekaran2012convex}, and when their
quantized random projections are consistent
\cite{jacques2014error,jacques2015small}. In particular, we show that for sub-Gaussian sensing
matrices, the $\ell_2$-reconstruction error of CoBP decays
as $\nicefrac{\sqrt{w(\cl K)}}{M^{1/4}}$, with an additional constant error bias
arising in the case of non-Gaussian sensing matrices. This contrasts with BPDN, whose 
reconstruction error is only guaranteed to saturate when $M$ increases.

The rest of this paper is structured as follows. Sec.~\ref{sec:problem-statement} introduces the problem
by explaining the low-complexity signal space, our Quantized
Compressed Sensing (QCS) model and the BPDN
reconstruction procedure as generally used in QCS. Sec.~\ref{sec:prox-conc-vect}
reviews important results on the proximity of consistent vectors; in
Sec.~\ref{sec:cons-basis-purs} we introduce and analyze CoBP. Finally,
Sec.~\ref{sec:experiments} demonstrates experimentally the
capabilities of this method in QCS of signals and matrices, before concluding.

\noindent\emph{Conventions:} Vectors and matrices are associated to bold
symbols. The probability of an event $\cl X$ is $\bb P(\cl
X)$. The identity matrix is $\Id_{D} \in \bb R^{D \times D}$ ($D
\in \bb N$), $\range{D}:=\{1,\,\cdots,D\}$ and $|\cl S|$ is the cardinality of $\cl S \subset \range{D}$. The $\ell_p$-norm of $\bs u$ is
$\|\bs u\|_p$ and the unit $\ell_p$-ball is $\bb B_p^{N}=\{\bs x\in\Rbb^N:
\|\bs x\|_p\leq 1\}$, with $\bb B^{N}:=\bb
B_2^N$. Assuming $N= n^2$ is a
square number, for a matrix $\bs U=(\bs u_1,\,\cdots,\bs u_n)  \in \bb R^{n \times
  n}$ with \emph{vectorization} $\ve(\bs U):=(\bs u^T_1,\,\cdots,\bs u^T_n)^T \in \bb R^N$, $\rank(\bs U)$, $\|\bs U\|$, $\|\bs U\|_*$ and $\|\bs U\|_F:=\tr(\bs U^T\bs
U)^{1/2}=\|\!\ve(\bs U)\|_2$ denote its
rank, operator norm, nuclear norm and its Frobenius norm,
respectively.  We will often assimilate
matrices in $\bb R^{n \times n}$ with their vectorization in $\bb R^{N}$, \eg identifying $\{\bs U \in \bb R^{n \times n}:
  \|\bs U\|_F \leq 1\}$ with $\bb B^N$. Finally, we write $f\lesssim
g$ or $f = O(g)$ if $f \leq c\,g$ for~$c>0$, and similarly for $f\gtrsim g$ and $f = \Omega(g)$.

\section{Quantized Compressed Sensing of Low-Complexity Signals}
\label{sec:problem-statement}

\subsection{Low-complexity Signal Model}
\label{sec:low-compl-sign}
 
This work focuses on the sensing of signals belonging to a low-complexity
set $\cl K \subset \bb R^N$. A typical example is the set of $K$-sparse vectors $\cl K = \Sigma_K:=\{\bs u \in \bb R^N: \|\bs
u\|_0 := |\supp \bs u| \leq K\}$, as well as the set of rank-$r$ matrices
$\cl C_r := \{\bs U \in \bb R^{n \times n} \simeq \bb R^N:
\rank(\bs U) \leq r\}$. 

As in \cite{chandrasekaran2012convex}, we assume that the (bounded) \emph{convex hull} $\overline {\cl K} := \conv({\cl K
\cap \bb B^N})$ of $\cl K$ is associated to the definition of an
appropriate \emph{atomic} norm\footnote{If
  $\cl K$ is convex and centrally symmetric around the origin,
  ${\|\!\cdot\!\|_\atomic}$ can always be defined by the \emph{gauge} of
  $\cl K$ (see \cite{chandrasekaran2012convex} for details).}
${\|\!\cdot\!\|_\atomic}$ such that 
\begin{equation}
  \label{eq:K-s-def}
\overline {\cl K} = \overline {\cl K}_s := \{\bs u \in \bb R^N: \|\bs u\|_\atomic \leq s, \|\bs
u\|_2 \leq 1\},  
\end{equation}
for some $s>0$. For instance, for compressible signals in $\overline{\Sigma}_K$, ${\|\!\cdot\!\|_\atomic} =
{\|\!\cdot\!\|_1}$ and $s=\sqrt K$, while for matrices in $\overline
{\cl C}_r$,
${\|\!\cdot\!\|_\atomic} = {\|\!\cdot\!\|_\ast}$ for~$s = \sqrt r$~\cite{plan2012robust}.

The ``low-complexity'' nature of these sets
stems from their small \emph{Gaussian mean width} 
$$
w(\cl K) := \bb E\sup_{\bs u \in \cl K} |\bs g^T \bs u|,\quad \bs g \sim
\cl N(0, \Id_N).
$$
For instance, $w(\Sigma_K)^2 = w(\overline \Sigma_K)^2 \lesssim K \log
\nicefrac{N}{K}$ and $w(\cl
C_r)^2 = w(\overline{\cl C_r})^2 \leq 4 n r$~\cite{ai2014one,chandrasekaran2012convex}.
The quantity $w(\cl K)$, also called Gaussian complexity,
has been recognized as central, \eg for random
processes characterization \cite{vaart1996weak}, high-dimensional
statistics and inverse problem solving \cite{chandrasekaran2012computational,chandrasekaran2012convex} or
classification in randomly projected domains~\cite{bandeira2014compressive}. As explained below, $w(\cl K)$ also determines
the minimal number of measurements for CS of
signals in $\cl K$~\cite{chandrasekaran2012convex}.

\subsection{Quantized Compressed Sensing}
\label{sec:quant-compr-sens}

 Given a certain quantization resolution $\delta>0$, we
focus on the impact of a uniform (midrise) quantizer
$\cl Q(t) := \delta (\lfloor\tfrac{t}{\delta}\rfloor + \frac{1}{2})
 \in \bb Z_\delta := \delta (\bb Z + \frac{1}{2})$, applied
 componentwise, in the quantized sensing model
\begin{equation}
  \label{eq:psi-def}
  \bs q = \bqmap(\bs x_0) := \cl Q(\bs\Phi \bs x_0 + \bs \xi) \in \bb Z^M_\delta,  
\end{equation}
where $\bs \Phi \in \bb R^{M\times N}$ is a random \emph{sensing matrix}
and $\bs \xi \sim \cl U^M([-\delta/2, \delta/2])$ (\ie $\xi_i
\sim_{\rm iid} \cl U([-\delta/2, \delta/2])$ for $i \in \range{M}$) is a uniform
dithering\footnote{As in \cite{jacques2015small}, our results remain
  valid if $\bs \xi \sim \cl U^M([t, t+\delta])$ for any $t\in\bb
  R$.}. This random dithering is known at the signal reconstruction and stabilizes 
the action of $\cl Q$~\cite{B_TIT_12,Gray98,jacques2013quantized}. By slightly abusing the
notation, when \eqref{eq:psi-def} senses an element $\bs X_0$ of a matrix set
in $\bb R^{n\times n}$, $\bs
x_0 = \ve(\bs X_0)$ amounts to the $N$-length vectorization of this element, assuming
$N=n^2$.  

As often the case in CS, we consider that $\bs \Phi$ is
a sub-Gaussian random matrix, \ie its entries are distributed as $
\Phi_{ij} \sim_{\rm iid} \varphi$ with $\varphi$ a symmetric,
zero-mean and unit-variance sub-Gaussian random variable (\rv), having
finite sub-Gaussian norm 
$$
\textstyle \|\varphi\|_{\psi_2} := \sup_{p\geq 1} p^{-1/2}(\bb
E|\varphi|^p)^{1/p} < \infty.
$$
For such a \rv of sub-Gaussian norm $\alpha>0$, we have in fact $\bb
P[|\varphi| > t] \lesssim \exp(-c t^2/\alpha^2)$ for any $t>0$.
Examples of such \rv's are Gaussian, uniform, bounded or Bernoulli
distributed \rv's. Below, we write $\varphi \sim \normsg(0,1)$, and
the shorthand $\bs \Phi \sim \normsg^{M\times N}(0,1)$ for the
associated $M\times N$ matrix, to specify that
$\varphi$ is a sub-Gaussian \rv of norm $\alpha$. 

In the absence of quantization, if $M \gtrsim w(\cl
K)^2$, with high probability, any $\bs x_0 \in \cl K$ can be reconstructed from
sub-Gaussian observations $\bs \Phi \bs x_0$ using convex optimization programs
such as Basis Pursuit~\cite{chandrasekaran2012convex}. Therefore, the
minimal number of measurements needed for reconstructing $K$-sparse
or compressible signals in $\bb R^N$ grows like $K \log
\nicefrac{N}{K}$, and like $n r$ for rank-$r$ and compressible $n\times n$ matrices~\cite{ai2014one,chandrasekaran2012convex}.

\paragraph{1-bit Quantization Regime:} The
exponentially decaying tail bounds of the sub-Gaussian entries of $\bs
\Phi$ show that a suitable
value of $\delta$ can essentially turn \eqref{eq:psi-def} into
a 1-bit CS model when $\cl K$ is
bounded \cite{jacques2013robust,plan2013one,BB_CISS08}. Indeed, from
the definition of $\cl Q$ and assuming $\|\bs x_0\|_2=1$, for $i\in
\range{M}$, $\bb P[q_i
\notin \{\pm \delta/2\}] = \bb P[|\bs\varphi_i^T \bs x + \xi_i| \geq
\delta ] = p_0 \leq 2\exp(-\tinv{2}\delta^2)$, with $p_0=0.0027$ for
$\delta = 3$. Our study holds in such a regime with
the interesting advantage of allowing the estimation of the signal
norm, as opposed to the 1-bit CS model $\sign(\bs\Phi\bs
x_0)$~\cite{BB_CISS08,plan2012robust}. This is due to the
pre-quantization dithering in \eqref{eq:psi-def}. Interestingly, combining the sign operator with prequantization thresholds
in 1-bit CS also removes this signal norm uncertainty~\cite{knudson2014one}.

\subsection{Basis Pursuit Denoise}
\label{sec:basis-purs-deno}

 The first method used to estimate
$\bs x_0$ from $\bs q$ in \eqref{eq:psi-def} was considering quantization as an additive noise of bounded power under high
resolution assumption (HRA), \ie $\delta \ll \|\bs
x_0\|_2$~\cite{candes2006ssr}. In \eqref{eq:psi-def}, the impact of the dithering
provides $\bs q = (\bs \Phi \bs x
+ \bs \xi) + \bs n$ with $\bs n := \bs q -  (\bs \Phi \bs x
+ \bs \xi) \sim \cl U^M([-\delta/2,\delta/2])$. Therefore, $\|\bs n\|_2^2 \leq \epsilon^2 =
M\frac{\delta^2}{12} + \kappa\sqrt M$ holds with high probability for
small $\kappa$ (\eg $\kappa=2$)~\cite{Jacques2010,Gray98}. In such a case, the general BPDN program, 
\begin{equation*}
 \bs x^*_{\rm BPDN} := \argmin_{\bs u \in \bb R^N} \|\bs u\|_\atomic\ \st\
  \|\bs\Phi \bs u + \bs \xi - \bs q\|_2 \leq \epsilon  
  \eqno{\text{({\small  BPDN})}}
\end{equation*}
can be solved for estimating $\bs x_0$. When $\nicefrac{\bs
  \Phi}{\sqrt M}$
satisfies the restricted isometry property (RIP) and when $\cl K$ is the set
of sparse signals, then, setting
${\|\!\cdot\!\|_\atomic}={\|\!\cdot\!\|_1}$,~\cite{GLPSY13,Jacques2010} show that 
\begin{equation*}
  \|\bs x^* - \bs x_0\|_2 = O(\nicefrac{\epsilon}{\sqrt M}) = O(\delta).
\end{equation*}
A similar result holds in the case of QCS of low-rank matrices using a
Lasso reconstruction that minimizes a Lagrangian formulation of
BPDN~\cite{candes2011tight}. Notice that a variant of BPDN,
  called Basis Pursuit DeQuantizer of moment $p$ (BPDQ \cite{Jacques2010}), replaces the
  $\ell_2$-norm of the BPDN constraint by an
  $\ell_p$-norm ($2\leq p < \infty$). Its error decays like 
  $O(\delta/\sqrt{\log M})$~\cite{BoufChapter}.

\section{Proximity of Consistent Vectors}
\label{sec:prox-conc-vect}

This section summarizes a recent study showing that the proximity of
vectors of a subset $\cl K \subset \bb R^N$ with small Gaussian mean
width can be bounded provided they share the same image through the random mapping
$\bqmap$, \ie if they are consistent~\citep{jacques2015small}. As will be clear in Sec.~\ref{sec:cons-basis-purs}, this
property is the key for characterizing the behavior of CoBP.  

This proximity is impacted by the level of \emph{anisotropy} of the 
sub-Gaussian rows composing $\bs \Phi \sim \normsg^{M\times
  N}(0,1)$~\citep{ai2014one}, as measured by the smallest $\kappasg
\geq 0$ such that, for $\bs \varphi \sim \normsg^N(0,1)$, $\bs g \sim
\cl N^{N}(0,1)$ and all $\bs u \in \bb R^N$,
\begin{equation}
  \label{eq:Berry-Esseen-relation}
\!\!\textstyle\int_{0}^{+\infty}\big|\bb P(|\scp{\bs \varphi}{\bs u}| \geq t) -
  \bb P(|\scp{\bs g}{\bs u}| \geq t)\big|\,\ud t\,\leq\,\kappasg \|\bs u\|_{\infty}.
\end{equation}
For Gaussian (isotropic) random vectors $\kappasg = 0$, while for
sub-Gaussian $\bs \varphi \sim \normsg^N(0,1)$, $\kappasg \leq 9\sqrt{27}\,\alpha^3$, with $\alpha \leq 1$ for Bernoulli \rv's~\citep{jacques2015small}. 

As clarified in Prop.~\ref{prop:consistency-width}, when the mapping
$\bqmap$ integrates a non-Gaussian, but sub-Gaussian sensing
matrix~$\bs \Phi$, the proximity of consistent elements
$\bs x,\bs y$ in $\cl K$ is guaranteed when $\bs x - \bs y$ is not
``too sparse'', \ie when it belongs to 
\begin{equation*}
\amgis_{K_0} := \{\bs u \in \bb R^N: K_0\|\bs u\|^2_\infty \leq \|\bs
u\|_2^2\},  
\end{equation*}
for $K_0$ large enough compared to $\kappasg^2$. For instance, a $K$-sparse
vector $\bs u \in \Sigma_K := \{\bs v:
\|\bs v\|_0 := |\supp \bs v| \leq K\}$ cannot belong to $\amgis_{K_0}$
for $K_0 > K$ as then $\|\bs
u\|_2^2 \leq K \|\bs u\|^2_\infty$. 

\begin{proposition}[Consistency width \cite{jacques2015small}]
\label{prop:consistency-width}
Given a quantization resolution $\delta >0$, $\epsilon \in (0,1)$, a sub-Gaussian distribution $\normsg(0,1)$ respecting
\eqref{eq:Berry-Esseen-relation} for $0 \leq \kappasg < \infty$,
and $\cl K \subset \bb B^{N}$ a
bounded subset of $\bb R^N$, there exist some values
$C,c>0$ depending only on $\alpha$ and such that, if 
\begin{equation}
  \label{eq:prop-consist-width-minimal-cond}
  M \geq C\,\tfrac{(2 + \delta)^4}{\delta^2 \epsilon^4} \, w(\cl K)^2,    
\end{equation}
then, for $\bs \Phi \sim \normsg^{M\times N}(0,1)$, $\bs \xi \sim
\cl U^M([-\delta/2, \delta/2])$ and $\sqrt K_0 \geq 16 \kappasg$, with probability exceeding $1 - 2\exp(-c \epsilon M/(1+\delta))$, we have for all $\bs x,\bs y \in
\cl K$
\begin{equation}
  \label{eq:consistency-width}
  \bs x - \bs y \in \amgis_{K_0},\ \bqmap(\bs x) = \bqmap(\bs y)\quad \Rightarrow\quad \|\bs x - \bs y\|_2 \leq \epsilon, 
\end{equation}
with $\bqmap$ defined in \eqref{eq:psi-def}. Moreover, for any orthonormal basis
$\bs \Psi \in \bb R^{N \times N}$, if $\cl K = (\bs \Psi \Sigma_K)
\cap \bb B^N$ then \eqref{eq:prop-consist-width-minimal-cond} simplifies to
\begin{equation}
  \label{eq:prop-consist-width-minimal-cond-K-sparse}
  M \geq C'\,\tfrac{2 + \delta}{\epsilon}\, K
  \log\big(\tfrac{N}{K \delta}(\tfrac{2 +
    \delta}{\epsilon})^{3/2}\big),
\end{equation}
for some $C'>0$ depending only on $\alpha$.
\end{proposition}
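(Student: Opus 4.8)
The plan is to establish the "consistency width" bound by relating the event
$\bqmap(\bs x)=\bqmap(\bs y)$ to a uniform concentration statement about how
the quantized dithered projections $\cl Q(\scp{\bs\varphi_i}{\cdot}+\xi_i)$
separate pairs of points whose difference lies in $\amgis_{K_0}$. The natural
strategy is a covering/soft-indicator argument: replace the discontinuous
consistency condition by a Lipschitz proxy, prove concentration of that proxy
uniformly over $\cl K-\cl K$, and then transfer the bound back. Before diving
in, let me fix the key geometric object. For a single row
$\bs\varphi:=\bs\varphi_i$ with dither $\xi$, define the random function
$f_{\bs\varphi,\xi}(\bs u):=|\cl Q(\scp{\bs\varphi}{\bs x}+\xi)-\cl
Q(\scp{\bs\varphi}{\bs y}+\xi)|$ where $\bs u=\bs x-\bs y$; the crucial
observation is that, once the dither is uniform over an interval of length
$\delta$, the \emph{expected} one-dimensional mismatch admits the clean
identity $\E\,\mathds{1}[\cl Q(a+\xi)\neq\cl Q(b+\xi)]=|a-b|/\delta$ whenever
$|a-b|\leq\delta$ (and is $\le 1$ in general). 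This is what makes the dithering
indispensable and is the engine behind the $\sqrt{w(\cl K)}/M^{1/4}$ rate.

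First I would lower-bound the expected number of "separating" measurements. For
fixed $\bs u$ with $\|\bs u\|_2$ bounded below by $\epsilon$, I want to show
$\tfrac1M\sum_i\mathds{1}[\cl Q(\scp{\bs\varphi_i}{\bs x}+\xi_i)\neq\cl
Q(\scp{\bs\varphi_i}{\bs y}+\xi_i)]$ is bounded away from $0$ in expectation.
Taking expectation over $\xi_i$ first yields a quantity comparable to
$\E\min(|\scp{\bs\varphi_i}{\bs u}|/\delta,1)$, and here the anisotropy
constant $\kappasg$ enters exactly through
\eqref{eq:Berry-Esseen-relation}: I would compare this sub-Gaussian
expectation to its Gaussian counterpart, paying an additive price
$\kappasg\|\bs u\|_\infty/\delta$. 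This is where the set $\amgis_{K_0}$ does its
work — the restriction $K_0\|\bs u\|_\infty^2\le\|\bs u\|_2^2$ bounds
$\|\bs u\|_\infty\le\|\bs u\|_2/\sqrt{K_0}$, so for $\sqrt{K_0}\ge 16\kappasg$
the Gaussian term dominates and the expected separation rate stays proportional
to $\min(\|\bs u\|_2/\delta,\text{const})$, uniformly in the direction of
$\bs u$.

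Next I would upgrade this pointwise expectation to a \emph{uniform} bound over
all pairs $\bs x,\bs y\in\cl K$ via concentration. The clean route is to pass
to a Lipschitz surrogate for the indicator, bound the deviation of the empirical
average from its mean by a symmetrization/Gaussian-comparison argument, and show
the supremum of the deviation is controlled by the Gaussian mean width
$w(\cl K)$ of the index set (this is precisely why $w(\cl K)$, not the ambient
dimension $N$, governs the sample complexity). Combining the uniform deviation
bound with the expected-separation lower bound, I conclude that on an event of
probability $1-2\exp(-c\epsilon M/(1+\delta))$, any pair with
$\|\bs x-\bs y\|_2>\epsilon$ is separated by at least one measurement, hence
consistency $\bqmap(\bs x)=\bqmap(\bs y)$ forces $\|\bs x-\bs y\|_2\le\epsilon$.
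Tracking the constants through the comparison inequality yields the stated
sample-complexity scaling $M\gtrsim\tfrac{(2+\delta)^4}{\delta^2\epsilon^4}w(\cl
K)^2$, the quartic dependence in $\epsilon$ arising from combining a linear
separation rate with a deviation term of order $\sqrt{w(\cl K)^2/M}$.

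\textbf{The main obstacle} will be making the uniform concentration rigorous
despite the discontinuity of $\cl Q$: the indicator functions are not Lipschitz,
so I cannot directly apply Gaussian concentration or a standard chaining bound
to the empirical process. I expect to handle this by a careful soft-thresholding
of the quantizer — approximating each indicator by a Lipschitz ramp of
appropriate slope and absorbing the approximation error into the deviation
budget — and by controlling the resulting empirical width through a covering
of $\cl K$ at scale set by $\delta$ and $\epsilon$. For the specialization to
$\cl K=(\bs\Psi\Sigma_K)\cap\bb B^N$, I would simply substitute the bound
$w(\Sigma_K)^2\lesssim K\log(N/K)$ into
\eqref{eq:prop-consist-width-minimal-cond} and simplify, verifying that the
logarithmic factor collects into the form displayed in
\eqref{eq:prop-consist-width-minimal-cond-K-sparse}.
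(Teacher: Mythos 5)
First, a point of reference: the paper contains no proof of Proposition~\ref{prop:consistency-width} --- it is imported verbatim from \cite{jacques2015small} --- so your attempt can only be measured against the argument in that reference. The core of your sketch does track that argument: the dither identity $\E_\xi\,\Id[\cl Q(a+\xi)\neq\cl Q(b+\xi)]=|a-b|/\delta$ for $|a-b|\leq\delta$, the sub-Gaussian--to--Gaussian comparison through \eqref{eq:Berry-Esseen-relation} at an additive cost $\kappasg\|\bs u\|_\infty/\delta$, the role of $\amgis_{K_0}$ in converting that cost into $\|\bs u\|_2/(16\delta)$ so that the Gaussian separation rate dominates, and the need to soften the discontinuous quantizer before running a width-controlled uniform deviation bound are all the right ingredients, and you correctly identify the uniformity-despite-discontinuity issue as the crux.

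There are, however, two concrete gaps. The more serious one is your treatment of the sparse case: \eqref{eq:prop-consist-width-minimal-cond-K-sparse} does \emph{not} follow by ``simply substituting'' $w(\Sigma_K)^2\lesssim K\log(N/K)$ into \eqref{eq:prop-consist-width-minimal-cond}. That substitution yields $M\gtrsim\tfrac{(2+\delta)^4}{\delta^2\epsilon^4}K\log(N/K)$, which scales as $\epsilon^{-4}$, whereas the stated refinement is only \emph{linear} in $(2+\delta)/\epsilon$ with all remaining dependence on $\epsilon$ and $\delta$ absorbed inside the logarithm, i.e.\ it scales as $\epsilon^{-1}\log(1/\epsilon)$. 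This is a strictly stronger claim and requires a separate argument exploiting the union-of-subspaces structure of $\bs\Psi\Sigma_K$: one replaces the generic mean-width bound by a direct covering-number count of $(\bs\Psi\Sigma_K)\cap\bb B^N$ at a resolution tied to $\epsilon$ and $\delta$, followed by a union bound, which is precisely why those scale parameters end up inside the $\log$ rather than as polynomial prefactors. The second, smaller gap is your accounting for the quartic dependence in the general condition: balancing a separation rate that is linear in $\|\bs x-\bs y\|_2$ against a deviation of order $\sqrt{w(\cl K)^2/M}$ gives only $M\gtrsim w(\cl K)^2/\epsilon^2$; the extra two powers of $1/\epsilon$ (and the $(2+\delta)^4/\delta^2$ prefactor) arise from the loss incurred when the quantizer is softened at a scale proportional to $\epsilon$, and a complete proof must track that loss explicitly rather than attribute the exponent to the deviation term alone.
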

We remark that for Gaussian sensing matrices, the
``antisparse'' condition on $\bs x-\bs y$ (and on $K_0$) vanishes
since $\kappasg = 0$. This provides, in the special case of the sparse signal
  set, a proximity bound in \eqref{eq:consistency-width} formerly established in \cite{jacques2014error}.  
 
\section{Consistent Basis Pursuit}
\label{sec:cons-basis-purs}

The previous sections allow us now to define a
suitable reconstruction
procedure for estimating any signal $\bs x_0 \in \overline {\cl K}_s$
(for some $s>0$ in~\eqref{eq:K-s-def}) observed through the
model~\eqref{eq:psi-def}, \eg for reconstructing compressible
signals or matrices belonging to $\overline{\Sigma}_K$ or
$\overline{\cl C}_r$, respectively. We split the study according to the nature
of the sensing matrix. 

\subsection{Gaussian Sensing Matrix} 
\label{sec:gauss-sens-matr}

When $\bs \Phi$ is Gaussian, \ie $\kappasg = 0$, we propose to
estimate $\bs x_0$ with the
following program coined Consistent Basis Pursuit,
\begin{equation*}
  \bs x^* := \argmin_{\bs u \in \bb R^N} \|\bs u\|_\atomic\ \st\ 
    \bqmap(\bs u) = \bqmap(\bs x_0),\ \bs u \in \bb B^N.
  \eqno{\text{({\small  CoBP})}}
\end{equation*}
This is a convex optimization as the first constraint is
  equivalent to $\|\bs\Phi \bs u + \bs \xi - \bqmap(\bs x_0)\|_\infty
  \leq \delta/2$~\cite{Jacques2010}. The proximity of $\bs x^*$ to
$\bs x_0$ is then guaranteed by
Prop.~\ref{prop:consistency-width}. 
\begin{proposition}
\label{prop:cons-basis-purs-stab-Gauss}
If $\bqmap$ respects \eqref{eq:consistency-width} for all $\bs
x,\bs y \in \overline{\cl K}_s$ and $K_0 = 0$, then for all $\bs x_0 \in \overline{\cl K}$, the
estimate $\bs x^*$ obtained by CoBP from $\bs q = \bqmap(\bs x_0)$ satisfies
$
\|\bs x_0 - \bs x^*\|_2 \leq \epsilon
$. 
\end{proposition}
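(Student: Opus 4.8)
The plan is to run the standard "feasibility, optimality, consistency" argument, where the essential role of the normalization constraint $\bs u\in\bb B^N$ in CoBP is to force the minimizer back into the set $\overline{\cl K}_s$ so that the consistency bound \eqref{eq:consistency-width} of Prop.~\ref{prop:consistency-width} becomes applicable.

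First I would check that $\bs x_0$ is feasible for CoBP. The consistency constraint $\bqmap(\bs u)=\bqmap(\bs x_0)$ is trivially met at $\bs u=\bs x_0$, and since $\overline{\cl K}=\conv(\cl K\cap\bb B^N)\subseteq\bb B^N$ we have $\bs x_0\in\bb B^N$; hence $\bs x_0$ satisfies both constraints. Optimality of $\bs x^*$ then yields $\|\bs x^*\|_\atomic\leq\|\bs x_0\|_\atomic$. Because $\bs x_0\in\overline{\cl K}=\overline{\cl K}_s$ by \eqref{eq:K-s-def}, we have $\|\bs x_0\|_\atomic\leq s$, and therefore $\|\bs x^*\|_\atomic\leq s$ as well.

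Next I would place $\bs x^*$ itself in $\overline{\cl K}_s$. The feasibility constraint $\bs x^*\in\bb B^N$ gives $\|\bs x^*\|_2\leq 1$, and combined with the atomic-norm bound $\|\bs x^*\|_\atomic\leq s$ from the previous step, this is precisely the defining condition of $\overline{\cl K}_s$ in \eqref{eq:K-s-def}. Thus both $\bs x_0$ and $\bs x^*$ lie in $\overline{\cl K}_s$, and by the consistency constraint of CoBP they satisfy $\bqmap(\bs x^*)=\bqmap(\bs x_0)$. Since $K_0=0$, the set $\amgis_{K_0}=\{\bs u:K_0\|\bs u\|_\infty^2\leq\|\bs u\|_2^2\}$ reduces to all of $\bb R^N$, so the antisparse membership $\bs x_0-\bs x^*\in\amgis_0$ holds automatically; this is exactly the simplification afforded by the Gaussian case $\kappasg=0$. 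Applying \eqref{eq:consistency-width} to the consistent pair $\bs x_0,\bs x^*\in\overline{\cl K}_s$ then gives $\|\bs x_0-\bs x^*\|_2\leq\epsilon$, as claimed.

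The only nontrivial point, and the step I would be most careful about, is the return of the minimizer into $\overline{\cl K}_s$: this hinges both on the $\ell_2$-ball constraint (which distinguishes CoBP from plain basis pursuit and guarantees $\|\bs x^*\|_2\leq 1$) and on optimality transferring the bound $\|\bs x_0\|_\atomic\leq s$ to $\bs x^*$. Once $\bs x^*$ is known to be a consistent element of $\overline{\cl K}_s$, everything else is immediate from Prop.~\ref{prop:consistency-width}.
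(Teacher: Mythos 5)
Your argument is the same as the paper's: feasibility of $\bs x_0$ gives $\|\bs x^*\|_\atomic\leq\|\bs x_0\|_\atomic\leq s$, the ball constraint gives $\bs x^*\in\bb B^N$, hence $\bs x^*\in\overline{\cl K}_s$, and \eqref{eq:consistency-width} applied to the consistent pair $(\bs x_0,\bs x^*)$ yields the bound. Your additional remark that $K_0=0$ makes the membership $\bs x_0-\bs x^*\in\amgis_{K_0}$ vacuous is a correct and worthwhile explicit check that the paper leaves implicit.
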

\begin{proof}
Since $\bs x_0 \in \overline{\cl K}_s$ is a feasible vector of the CoBP constraints, we necessarily
have $\|\bs x^*\|_\atomic \leq \|\bs x_0\|_\atomic \leq s$. By definition
of CoBP,
$\bs x^* \in \bb B^N$ so that $\bs x^* \in \overline{\cl K}_s$. The result follows from~\eqref{eq:consistency-width} with $\bs x
= \bs x_0$ and $\bs y = \bs x^*$.  
\end{proof}

Prop.~\ref{prop:cons-basis-purs-stab-Gauss} assumes that $K_0=0$ in \eqref{eq:consistency-width}. This
holds if $\kappasg = 0$, \eg if $\bs \Phi \sim \cl N^{M \times N}(0,1)$. Therefore, combining
the conditions of Prop.~\ref{prop:consistency-width} with this last
proposition, we get the following corollary by saturating~\eqref{eq:prop-consist-width-minimal-cond} with respect to
$M$.
\begin{corollary}
\label{cor:decay-gauss-error}
Given some universal constant $c>0$, with probability exceeding
$1-2\exp(-c M^{3/4}/\sqrt \delta)$ over the draw of $\bs \Phi \sim
\cl N^{M\times N}(0,1)$ and
$\bs \xi \sim \cl U^M([-\delta/2,\delta/2])$, for every $\bs x_0 \in
\overline{\cl K}_s$, the estimate $\bs x^*$ obtained by CoBP from $\bs
q = \bqmap(\bs x_0)$ satisfies 
$$
\textstyle \|\bs x_0 - \bs x^*\|_2 = O\big( \frac{2+\delta}{\sqrt \delta}\,(\tfrac{w(\overline{\cl K}_s)^2}{M})^{1/4} \big),
$$  
\ie $\|\bs x_0 - \bs x^*\|_2 = O\big(M^{-1/4}\big)$ if only $M$ varies.
\end{corollary}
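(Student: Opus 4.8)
The plan is to invoke Proposition~\ref{prop:consistency-width} and Proposition~\ref{prop:cons-basis-purs-stab-Gauss} in tandem, treating $\epsilon$ as a free parameter that I tune against the measurement budget $M$. First, since $\bs\Phi \sim \cl N^{M\times N}(0,1)$ is isotropic we have $\kappasg = 0$, so the requirement $\sqrt{K_0} \ge 16\kappasg$ is met by $K_0 = 0$. With $K_0 = 0$ the set $\amgis_{K_0}$ becomes all of $\bb R^N$ (the inequality $0\cdot\|\bs u\|_\infty^2 \le \|\bs u\|_2^2$ is vacuous), so the ``antisparse'' condition $\bs x - \bs y \in \amgis_{K_0}$ in \eqref{eq:consistency-width} drops out entirely. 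This is precisely the hypothesis needed by Proposition~\ref{prop:cons-basis-purs-stab-Gauss}.

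Next I would apply Proposition~\ref{prop:consistency-width} to the bounded set $\cl K = \overline{\cl K}_s \subseteq \bb B^N$, whose Gaussian width is $w(\overline{\cl K}_s)$. The key move --- the ``saturation'' alluded to in the text --- is to solve the measurement condition \eqref{eq:prop-consist-width-minimal-cond} for $\epsilon$ at equality: setting
\[
\epsilon = C^{1/4}\,\tfrac{2+\delta}{\sqrt\delta}\,\Big(\tfrac{w(\overline{\cl K}_s)^2}{M}\Big)^{1/4}
\]
makes \eqref{eq:prop-consist-width-minimal-cond} hold as an equality, hence the proposition applies (provided $M$ is large enough that this $\epsilon$ lies in $(0,1)$, which is the only place the ``$M$ large'' hypothesis enters). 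For this value, Proposition~\ref{prop:consistency-width} furnishes the consistency implication \eqref{eq:consistency-width} on an event of probability at least $1 - 2\exp(-c\epsilon M/(1+\delta))$.

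On that event, Proposition~\ref{prop:cons-basis-purs-stab-Gauss} (with $K_0 = 0$) gives $\|\bs x_0 - \bs x^*\|_2 \le \epsilon$ for every $\bs x_0 \in \overline{\cl K}_s$, and substituting the chosen $\epsilon$ yields exactly the claimed bound $O\big(\tfrac{2+\delta}{\sqrt\delta}(w(\overline{\cl K}_s)^2/M)^{1/4}\big)$, which collapses to $O(M^{-1/4})$ once all parameters but $M$ are fixed. It then remains to rewrite the probability exponent: since $(w^2/M)^{1/4}M = w^{1/2}M^{3/4}$, plugging $\epsilon$ into $c\epsilon M/(1+\delta)$ gives $cC^{1/4}\tfrac{2+\delta}{(1+\delta)\sqrt\delta}\,w(\overline{\cl K}_s)^{1/2} M^{3/4}$, and because $\tfrac{2+\delta}{1+\delta} \in [1,2]$ and the width factor is a positive constant, this is bounded below by a constant multiple of $M^{3/4}/\sqrt\delta$, recovering the stated $1 - 2\exp(-cM^{3/4}/\sqrt\delta)$.

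I expect the only real subtlety to be bookkeeping rather than a genuine obstacle: confirming that the saturating substitution is self-consistent (the \emph{same} $\epsilon$ must appear both in the error bound and in the probability exponent), and tracking how the $\delta$- and $w$-dependent prefactors are absorbed into the ``universal'' constant advertised in the statement. One should also keep in mind that $\epsilon \in (0,1)$ is required by Proposition~\ref{prop:consistency-width}, so the corollary is genuinely an asymptotic ($M$ large) statement; for small $M$ the bound is vacuous since $\bs x_0,\bs x^* \in \bb B^N$ force $\|\bs x_0 - \bs x^*\|_2 \le \diam(\bb B^N) = 2$ anyway.
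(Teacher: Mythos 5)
Your proposal is correct and follows exactly the route the paper intends (the paper only sketches it as ``saturating \eqref{eq:prop-consist-width-minimal-cond} with respect to $M$''): take $\kappasg=0$ so Prop.~\ref{prop:cons-basis-purs-stab-Gauss} applies with $K_0=0$, solve the measurement condition for $\epsilon$ at equality, and substitute that $\epsilon$ into both the error bound and the probability exponent. Your added remarks on the $\epsilon\in(0,1)$ requirement and on absorbing $w(\overline{\cl K}_s)^{1/2}$ and the $\tfrac{2+\delta}{1+\delta}$ factor into the constant are exactly the bookkeeping the paper leaves implicit.
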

At first sight, the error decay of CoBP in $M^{-1/4}$ could seem
slow. However, as mentioned in Sec.~\ref{sec:basis-purs-deno}, the best known error decay for
BPDN under the sensing model~\eqref{eq:psi-def}
is~$O(\delta)$~\cite{GLPSY13}, which does not decay with $M$. The same constant bound was
  found for a variant of  CoBP without the ball constraint~\cite{Dirsksen-gap-RIP-sparse}.

\begin{figure*}[t!]
  \centering
\newlength{\figwdth}
\setlength{\figwdth}{.45\textwidth}
\subfigure[Gaussian QCS of sparse
    signals.]{\includegraphics[width=\figwdth]{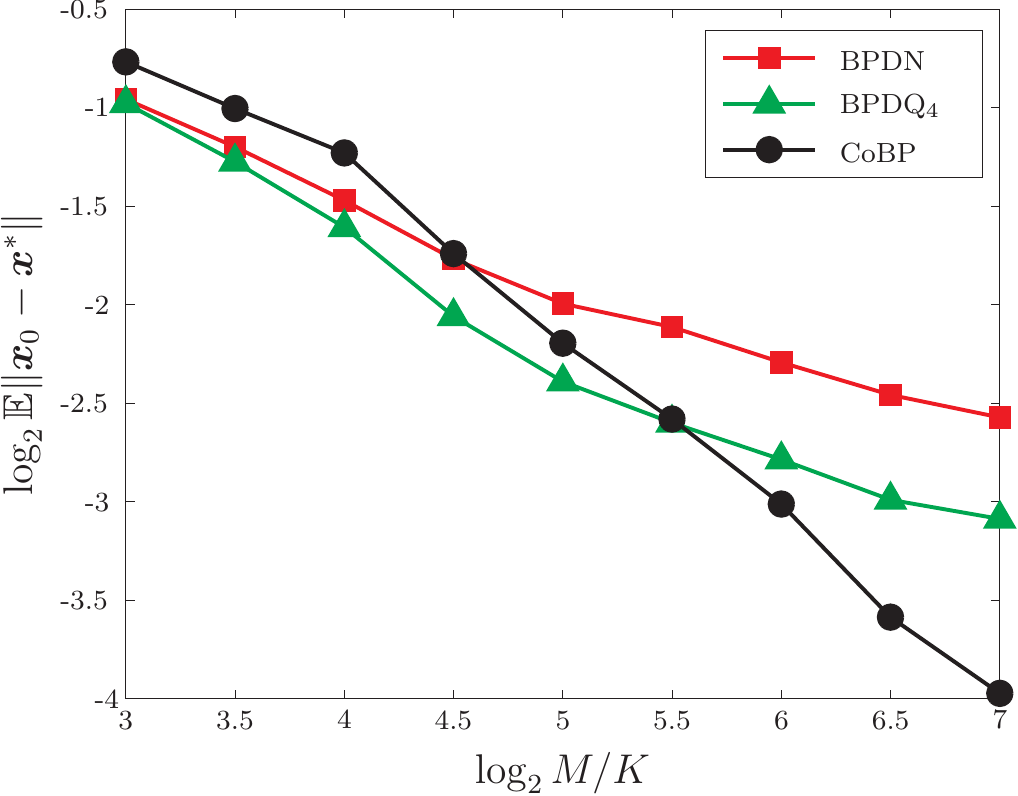}}
\subfigure[Bernoulli \emph{vs} Gaussian QCS of sparse
    signals.]{\raisebox{.5mm}{\includegraphics[width=\figwdth]{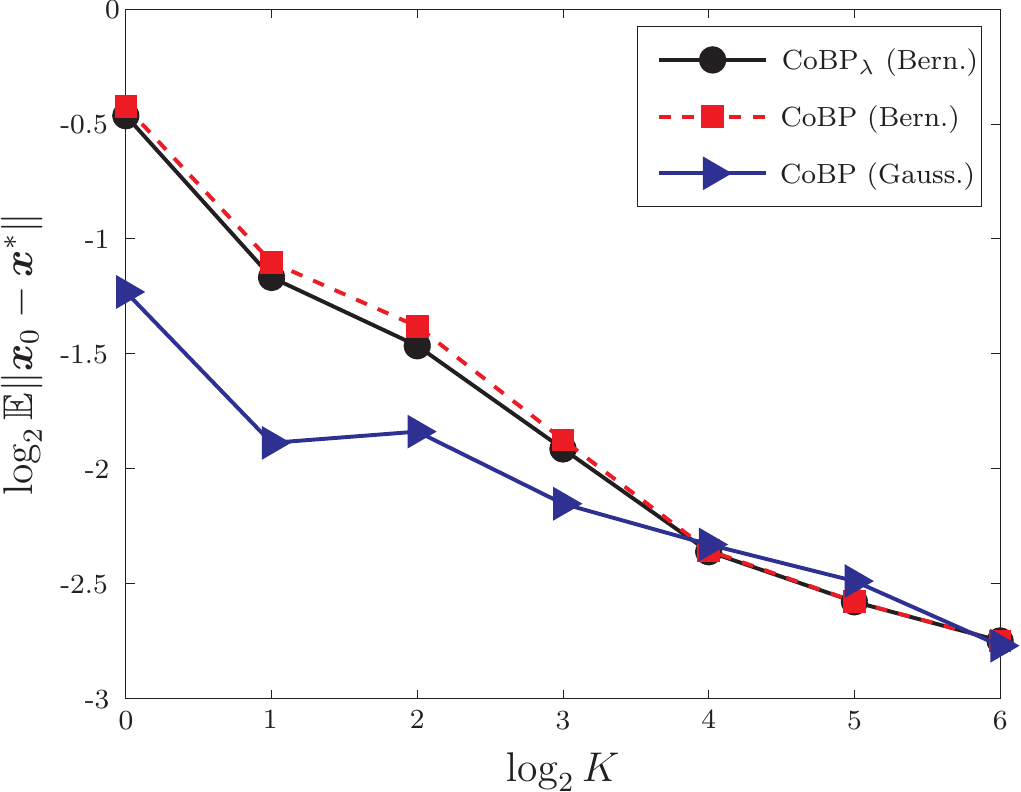}}}\\[4mm]
\subfigure[Gaussian QCS of rank-1
matrices.]{\raisebox{.25mm}{\includegraphics[width=\figwdth]{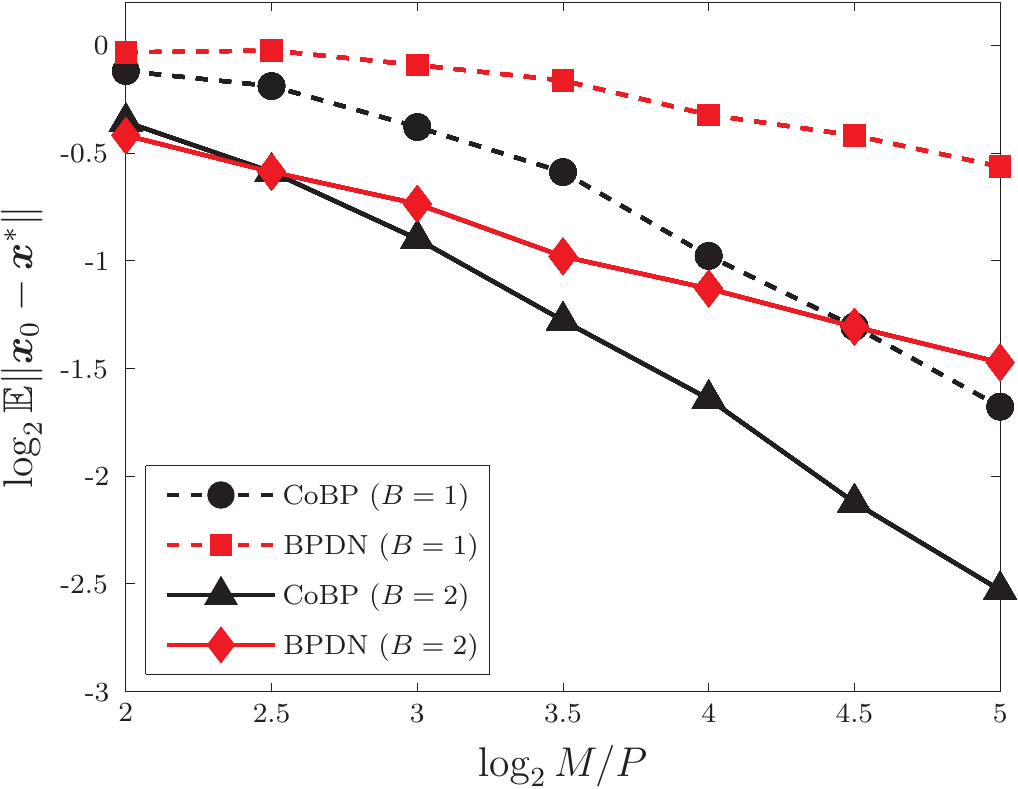}}}
\label{QCS-Gauss-sparse}
\end{figure*}

\subsection{Non-Gaussian Sensing Matrix} 

For non-Gaussian $\bs \Phi$, $\kappasg \neq 0$
in general. In order to reach a meaningful estimate of $\bs x_0 \in
\overline{\cl K}_s$, we further assume that $\|\bs x_0\|_\infty \leq \lambda$, for some
$\lambda>0$. As will be clear, this allows us to characterize the sparse nature of $\bs
x_0 - \bs x^*$ when $\bs x^*$ is an estimate of $\bs x_0$ produced by
the modified program:
\begin{equation*}
  \bs x^* := \argmin_{\bs u \in \bb R^N} \|\bs u\|_\atomic\ \st\
  \begin{cases}
    \bqmap(\bs u) = \bqmap(\bs x_0),\\
    \bs u \in \bb B^N \cap \lambda \bb B^N_\infty,
  \end{cases}
\eqno{({\rm CoBP}_{\lambda})}
\end{equation*}
with ${\bf CoBP}_{\lambda} \equiv {\bf CoBP}$ as soon as $\lambda > 1$
since $\bb B^N \subset \bb B^N_\infty$.

\begin{proposition}
\label{prop:cons-basis-purs-stab}
If $\bqmap$ respects \eqref{eq:consistency-width} for all $\bs
x,\bs y \in \overline{\cl K}_s$ and any $K_0 \geq (16 \kappasg)^2$,
then for any $\bs x_0 \in \overline{\cl K}_s \cap \lambda \bb
B_\infty^N$, the solution obtained by CoBP$_{\lambda}$ from $\bs q =
\bqmap(\bs x_0)$ respects
$$
\|\bs x_0 - \bs x^*\|_2 \leq \epsilon + 2\lambda \sqrt{K_0}.
$$ 
\end{proposition}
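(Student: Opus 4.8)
The plan is to mirror the feasibility argument used in the Gaussian case (Prop.~\ref{prop:cons-basis-purs-stab-Gauss}) and then to confront the one genuine non-Gaussian obstruction, namely that \eqref{eq:consistency-width} only constrains differences lying in $\amgis_{K_0}$, via a dichotomy on the membership of $\bs x_0 - \bs x^*$ in that set. First I would verify that $\bs x_0$ is feasible for CoBP$_\lambda$: by hypothesis $\bs x_0 \in \overline{\cl K}_s \cap \lambda \bb B^N_\infty$, so $\|\bs x_0\|_2 \leq 1$ (hence $\bs x_0 \in \bb B^N$), $\|\bs x_0\|_\infty \leq \lambda$, and of course $\bqmap(\bs x_0) = \bqmap(\bs x_0)$. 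Feasibility then forces $\|\bs x^*\|_\atomic \leq \|\bs x_0\|_\atomic \leq s$, while the constraints give $\bs x^* \in \bb B^N \cap \lambda \bb B^N_\infty$; consequently $\bs x^* \in \overline{\cl K}_s$ with $\|\bs x^*\|_\infty \leq \lambda$, and $\bqmap(\bs x^*) = \bqmap(\bs x_0)$ by construction. At this stage both $\bs x_0$ and $\bs x^*$ are consistent elements of $\overline{\cl K}_s$ whose $\ell_\infty$-norms are bounded by $\lambda$.

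The core step is the case split. If $\bs x_0 - \bs x^* \in \amgis_{K_0}$, then, since $\bs x_0,\bs x^*$ are consistent elements of $\overline{\cl K}_s$, the implication \eqref{eq:consistency-width} of Prop.~\ref{prop:consistency-width} applies directly and yields $\|\bs x_0 - \bs x^*\|_2 \leq \epsilon$, which is below $\epsilon + 2\lambda \sqrt{K_0}$. In the complementary case $\bs x_0 - \bs x^* \notin \amgis_{K_0}$, the box constraint is what pays off: by the very definition of $\amgis_{K_0}$, failure of membership means $\|\bs x_0 - \bs x^*\|_2^2 < K_0 \|\bs x_0 - \bs x^*\|_\infty^2$. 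I would then bound $\|\bs x_0 - \bs x^*\|_\infty \leq \|\bs x_0\|_\infty + \|\bs x^*\|_\infty \leq 2\lambda$ using the two $\ell_\infty$ bounds just established, obtaining $\|\bs x_0 - \bs x^*\|_2 < 2\lambda \sqrt{K_0} \leq \epsilon + 2\lambda \sqrt{K_0}$. Taking the larger of the two estimates over the dichotomy closes the argument.

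The conceptual obstacle, and the reason the plain bound of Prop.~\ref{prop:cons-basis-purs-stab-Gauss} must be amended once $\bs \Phi$ is non-Gaussian, is precisely that the consistency width result offers no control when $\bs x_0 - \bs x^*$ is ``too sparse'' (outside $\amgis_{K_0}$), a situation that cannot be excluded as soon as $\kappasg > 0$ forces $K_0 > 0$. The role of the extra constraint $\bs u \in \lambda \bb B^N_\infty$ is exactly to convert this potential failure into a usable estimate: it caps the per-coordinate magnitude of the error, so that in the sparse regime, where $\ell_2$ and $\ell_\infty$ differ by at most the factor $\sqrt{K_0}$, one automatically recovers the additive term $2\lambda \sqrt{K_0}$. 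I therefore expect no genuine analytic difficulty beyond recognizing this dichotomy; the remaining effort lies in checking that the feasibility bookkeeping and the defining inequality of $\amgis_{K_0}$ mesh cleanly.
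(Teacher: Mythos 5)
Your proposal is correct and follows essentially the same route as the paper's own proof: establish feasibility of $\bs x_0$ so that $\bs x^* \in \overline{\cl K}_s \cap \lambda \bb B^N_\infty$, then split on whether $\bs x_0 - \bs x^* \in \amgis_{K_0}$, invoking \eqref{eq:consistency-width} in one case and the $2\lambda\sqrt{K_0}$ bound from the $\ell_\infty$ box constraint in the other. No gaps.
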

\begin{proof}
As for the proof of Prop.~\ref{prop:cons-basis-purs-stab-Gauss},  $\bs
x_0\in \overline{\cl K}_s$ implies that $\bs x^* \in
\overline{\cl K}_s$. If $\|\bs x_0 - \bs x^*\|_2 \leq \sqrt{K_0} \|\bs x_0 - \bs
x^*\|_\infty$, then, since $\bs x_0,\bs x^* \in \lambda \bb
B^N_\infty$, $\|\bs x_0 - \bs x^*\|_2 \leq 2\lambda
\sqrt{K_0}$. Otherwise, we have $\bs x_0 - \bs x^* \in
\amgis_{K_0}$. In this case, since \eqref{eq:consistency-width} is
assumed satisfied for all pairs of vectors
of $\overline{\cl K}_s$, we have 
$\|\bs x_0 - \bs x^*\|_2 \leq \epsilon$, which concludes the proof.  
\end{proof}

\noindent Taking $K_0 = \lceil (16 \kappasg)^2 \rceil$, this corollary is easily established.
\begin{corollary}
\label{cor:decay-nongauss-error}
Given some universal constant $c>0$, with probability exceeding
$1-2\exp(-c \frac{M^{3/4}}{\sqrt \delta})$ over the draw of $\bs \Phi \sim
\normsg^{M\times N}(0,1)$ and
$\bs \xi \sim \cl U^M([-\delta/2,\delta/2])$, the CoBP$_{\lambda}$ estimate $\bs
x^*$ of any $\bs x_0
\in \overline{\cl K}_s\cap\lambda \bb B^N_\infty$ satisfies 
\begin{equation}
  \label{eq:non-gaussian-rec-error}
  \textstyle \|\bs x_0 - \bs x^*\|_2 = O\big(\frac{2+\delta}{\sqrt
    \delta}\,(\tfrac{w(\overline{\cl K}_s)^2}{M})^{1/4} + \kappasg \lambda\big),  
\end{equation}
\ie $\|\bs x_0 - \bs x^*\|_2 = O(M^{-1/4} + \kappasg \lambda)$ if only
$M$ varies.  
\end{corollary}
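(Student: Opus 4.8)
The plan is to establish Corollary~\ref{cor:decay-nongauss-error} in direct analogy with the Gaussian Corollary~\ref{cor:decay-gauss-error}, by \emph{saturating} the sampling condition~\eqref{eq:prop-consist-width-minimal-cond} in $M$ and then substituting the resulting accuracy $\epsilon$ into the deterministic guarantee of Prop.~\ref{prop:cons-basis-purs-stab}. First I would fix $K_0 = \lceil (16\kappasg)^2\rceil$, the smallest integer meeting the requirement $\sqrt{K_0}\geq 16\kappasg$ of Prop.~\ref{prop:consistency-width} and simultaneously the hypothesis $K_0\geq(16\kappasg)^2$ of Prop.~\ref{prop:cons-basis-purs-stab}. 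Applying Prop.~\ref{prop:consistency-width} with $\cl K=\overline{\cl K}_s$ then yields, with probability at least $1-2\exp(-c\epsilon M/(1+\delta))$, that the consistency-width implication~\eqref{eq:consistency-width} holds for all $\bs x,\bs y\in\overline{\cl K}_s$ as soon as~\eqref{eq:prop-consist-width-minimal-cond} is satisfied.

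Next I would saturate~\eqref{eq:prop-consist-width-minimal-cond}, i.e. turn it into an equality and solve for $\epsilon$ as a function of $M$, obtaining
\begin{equation*}
  \textstyle\epsilon \,=\, O\big(\tfrac{2+\delta}{\sqrt{\delta}}\,(w(\overline{\cl K}_s)^2/M)^{1/4}\big).
\end{equation*}
Feeding this choice into Prop.~\ref{prop:cons-basis-purs-stab} gives $\|\bs x_0-\bs x^*\|_2 \leq \epsilon + 2\lambda\sqrt{K_0}$, where the first term inherits the $M^{-1/4}$ decay just derived and the second, since $\sqrt{K_0}=\sqrt{\lceil(16\kappasg)^2\rceil}=O(\kappasg)$, contributes the irreducible bias $O(\kappasg\lambda)$. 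Summing the two reproduces exactly the error bound~\eqref{eq:non-gaussian-rec-error}.

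It remains to rewrite the success probability in the stated form. Substituting the saturated $\epsilon$ into the exponent $c\epsilon M/(1+\delta)$ produces a term of order $\tfrac{2+\delta}{1+\delta}\cdot\tfrac{\sqrt{w(\overline{\cl K}_s)}}{\sqrt{\delta}}\cdot M^{3/4}$. Because $\tfrac{2+\delta}{1+\delta}\in[1,2]$ is bounded and $w(\overline{\cl K}_s)$ is a fixed constant once every parameter but $M$ is frozen, both factors can be absorbed into a single universal constant $c$, leaving the claimed failure probability $2\exp(-c\,M^{3/4}/\sqrt{\delta})$.

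The argument carries essentially no analytic difficulty, as both propositions are assumed; the only points requiring care are the bookkeeping of constants in the probability exponent---verifying that $\tfrac{2+\delta}{1+\delta}$ stays bounded and that the $\sqrt{w(\overline{\cl K}_s)}$ factor is legitimately absorbed into $c$ under the ``only $M$ varies'' convention---and the harmless rounding $\sqrt{\lceil(16\kappasg)^2\rceil}=O(\kappasg)$. The conceptually important observation, rather than an obstacle, is that the two-regime structure of Prop.~\ref{prop:cons-basis-purs-stab} forces a genuine error floor: when $\bs x_0-\bs x^*$ is too sparse to lie in $\amgis_{K_0}$, consistency alone cannot separate it from $\bs x_0$, so the $O(\kappasg\lambda)$ bias persists no matter how large $M$ grows.
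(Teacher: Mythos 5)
Your proposal is correct and follows exactly the route the paper intends: the paper's entire "proof" is the one-line remark that the corollary follows by taking $K_0 = \lceil (16\kappasg)^2\rceil$ and combining Prop.~\ref{prop:cons-basis-purs-stab} with Prop.~\ref{prop:consistency-width} saturated in $M$, which is precisely your argument spelled out (including the $O(\kappasg\lambda)$ bias from $2\lambda\sqrt{K_0}$ and the absorption of $\tfrac{2+\delta}{1+\delta}$ and $\sqrt{w(\overline{\cl K}_s)}$ into the probability constant). No discrepancy to report.
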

Loosely speaking, Cor.~\ref{cor:decay-nongauss-error} shows that the reconstruction error
is not guaranteed to decay below a certain level fixed by
$\kappasg\|\bs x_0\|_\infty$. A similar behavior was already observed
in the case of 1-bit CS with non-Gaussian measurements~ \cite{ai2014one}.

\section{Experiments}
\label{sec:experiments}

In this section we run several numerical simulations in order to assess the
experimental benefit of CoBP compared to BPDN in various QCS
settings. As CoBP is a convex optimization problem containing non-smooth convex
functions, we solve\footnote{Free matlab code: \url{http://sites.uclouvain.be/ispgroup/index.php/Softwares}.} it with the versatile Parallel Proximal Algorithm (PPXA)
\cite{combettes2011proximal}, this one being efficiently implemented
in the UNLocBoX toolbox~\cite{perraudin2014unlocbox}. We refer the
reader to \cite{golbabaee2012hyperspectral} for an example application
of PPXA in the solution of low-rank matrix recovery. 

For our experiments, three different sensing contexts are tested: the two first ones consider QCS of sparse signals (for
Gaussian or Bernoulli sensing matrices), while
the last one focuses on QCS of rank-1 matrices.  In all cases, the
quantization resolution is fixed by $\delta =
6\times
2^{1-B}$ with $B\in \range{4}$. As explained in
Sec.~\ref{sec:quant-compr-sens}, each $q_i$ can then be
essentially coded with $B$ bits, \eg if $B=1$, $\bb E|\big\{i:
q_i \notin \{\pm \delta/2\}\big\}| \leq 0.0027 M$. Some of our results are
compared to those of BPDN with $\epsilon$ set as in
Sec.~\ref{sec:basis-purs-deno}. The constraint ``$\bs u \in \bb B^N$''
is also added to BPDN for reaching fair comparisons with
CoBP\footnote{The ratio of computational times between CoBP and BPDN is
about~$1.3$.}.

\subsection{Gaussian QCS of sparse signals} 

In this
experiment, we set $N=2048$, $K=16$, $B=3$ and $M/K \in [8,128]$, \ie well after the \emph{phase transition} (here around $M/K
\simeq 6$) where sparse signal reconstruction from noisy CS measurements is guaranteed~\cite{candes2006near}. 
For each value of $M$, 20 different Gaussian sensing matrices,
dithering realizations and unit-norm $K$-sparse signals were randomly
generated. Each signal $\bs x_0$ has its $K$-length support selected uniformly at random in $\range{N}$,
with non-zero components drawn as $\cl N(0,1)$ before
normalization. The reconstruction error decay
averaged over these 20 trials is
shown for BPDN, BPDQ with $p=4$ (see Sec.~\ref{sec:basis-purs-deno}) and CoBP in Fig.~\ref{QCS-Gauss-sparse}(left) in
a $\log_2/\log_2$ plot. For indication, a linear fitting over the last
4 values of $\log_2 M/K$ provides slopes of value $-0.31$, $-0.33$  and
$-0.95$ for BPDN, BPDQ and CoBP, respectively. As already observed
 experimentally in other works forcing tight or approximate
 consistency in signal reconstruction
 \cite{Kamilov12,goyal_1998_lowerbound_qc,Dai2009,Dai2011,Jacques2010}, this clearly highlights
the advantage of consistent signal reconstruction when $M/K$ is large. Moreover, CoBP approaches an
error decay of $M^{-1}$ similar to the distance decay of consistent
$K$-sparse vectors when
\eqref{eq:prop-consist-width-minimal-cond-K-sparse} is saturated, \ie better than the
``$M^{-1/4}$'' of Cor.~\ref{cor:decay-gauss-error}.

\subsection{Bernoulli \emph{vs} Gaussian QCS}
\label{sec:bern-emphvs-gauss}

This second experiment stresses the impact of the sub-Gaussian
nature of the sensing matrix over the CoBP reconstruction error. We focus
on the case of Gaussian QCS ($\bs \Phi \sim \cl N^{M \times
  N}(0,1)$) and Bernoulli QCS (\ie
$\Phi_{ij}$ equals $\pm 1$ with probability $1/2$) when observing
$K$-sparse signals for $K$ growing and $M/K$ constant. In particular, we set
$N=1024$, $B=4$, $K\in [1,64]$ and $M/K=16$. For
each value of $K$, 20
different sensing matrices, dithering realizations and unit-norm $K$-sparse
signals are generated as in the first experiment. CoBP and CoBP$_\lambda$ are compared
(with an oracle assisted $\lambda := \|\bs x_0\|_\infty$). Comparing
the error bounds for Gaussian and sub-Gaussian QCS
in Cor.~\ref{cor:decay-gauss-error} and in
Cor.~\ref{cor:decay-nongauss-error}, respectively, we expect that at
low $K$ and for $M/K$ constant, Bernoulli QCS reaches worst reconstruction
error than Gaussian QCS, as then the bias $\kappasg \lambda
= \kappasg \|\bs x_0\|_\infty \simeq
\kappasg /\sqrt{K}$ can be high. This is indeed observed in
Fig.~\ref{QCS-Gauss-sparse}(middle) with a clear gap between Bernoulli
and Gaussian QCS performances when $K\leq 16$. CoBP$_\lambda$ does lead to clear improvements over CoBP.    

\subsection{Gaussian QCS of rank-1 matrices} 

We reconstruct here rank-1 matrices in $\bb
R^{32\times 32}$ (\ie $N=1024$ and $n=32$) from the Gaussian QCS model
\eqref{eq:psi-def} with $B \in \{1,2\}$. Both CoBP
and BPDN are solved with ${\|\!\cdot\!\|}_\sharp =
{\|\!\cdot\!\|}_*$. The intrinsic complexity of such rank-1 matrices is $63 < P:=64$.
For each value of the oversampling ratio $M/P \in [4,32]$, we generate 20 different
Gaussian sensing
matrices, dithering realizations and rank-1
matrices according to $\bs x_0 = \ve(\bs X_0)$ and $\bs X_0 = \bs v
\bs v^T/\|\bs v\|_2^2$ with $\bs v \sim \cl N^n(0,1)$. As for the first
experiment on $K$-sparse signals, CoBP reaches a faster reconstruction
error decay than BPDN. At $B=2$, an indicative
linear fitting over the last 4 values of $M/K$ provides estimated
decay exponents for CoBP and BPDN of $-0.85$ and $-0.33$,
respectively. 

\section{Conclusion}
\label{sec:conclusion}
In the context of QCS of signals with low-complexity (\eg sparse signals, low-rank
matrices), we show that the consistent reconstruction method CoBP
has an estimation error decaying as $M^{-1/4}$, \ie faster than the
one of BPDN. This is confirmed numerically on several settings
with even faster effective decaying rate at quantization resolution as
low as one bit per measurement. As observed initially in
1-bit~CS~\cite{ai2014one}, QCS performances for general sub-Gaussian sensing matrices are also impacted when the sensed signal is ``too
sparse''.  Finally, to the best of our knowledge, we provided the first
theoretical analysis of CoBP in the case of low-rank
matrix reconstruction from QCS observations.

\end{document}